\newcommand{\rank}{\mathbf{rank}}
\renewcommand{\d}{\operatorname{d}}
\renewcommand{\c}{\mathbf{c}}
\newcommand{\e}{\mathbf{e}}
\newcommand{\Mat}{\mathbf{Mat}_{n\times n}(\F)}
\renewcommand{\L}[1]{\mathcal{L}_{#1}[\Fn]}
\newcommand{\F}{\mathbb{F}_q}
\newcommand{\Fn}{{\mathbb{F}_{q^n}}}
\renewcommand{\H}{\mathbf{H}}
\newcommand{\U}{\mathbf{U}}
\newcommand{\C}{\mathcal{C}}
\newcommand{\n}{\eta}
\newcommand{\m}{\mu}
\newcommand{\M}{\mathbf{M}}
\newcommand{\A}{\mathbf{A}}
\newcommand{\B}{\mathbf{B}}
\newcommand{\0}{\mathbf{0}}
\renewcommand{\ll}{\pmb{\mathbf{\lambda}}}
\renewcommand{\l}{\lambda}
\newcommand{\Tr}{\mathbf{Tr}\,}
\newcommand{\W}{\mathbf{W}}
\renewcommand{\a}{\alpha}
\theoremstyle{plain}
\newtheorem{thm}{Theorem}
\newtheorem{lem}{Lemma}
\newtheorem{cor}{Corollary}
\theoremstyle{definition}
\newtheorem{defn}{Definition}
\theoremstyle{remark}
\newtheorem{rem}{Remark}
\newtheorem{exa}{Example}
\author{Tovohery Hajatiana Randrianarisoa\footnote{The author is supported by SNF grant no. 169510}}
\title{A Decoding Algorithm for Rank Metric Codes}
\date{December 20, 2017} 
\begin{document}
\maketitle
\begin{abstract}
In this work we will present algorithms for decoding rank metric codes. First we will look at a new decoding algorithm for Gabidulin codes using the property of Dickson matrices corresponding to linearized polynomials. We will be using a Berlekamp-Massey-like algorithm in the process. We will show the difference between our and existing algorithms. Apart from being a new algorithm, it is also interesting that it can be modified to get a decoding algorithm for general twisted Gabidulin codes.
\end{abstract}

\section{Introduction}
Rank metric codes have many applications in network coding and in cryptography. So far, there are two known general constructions of rank metric codes with arbitrary parameters. The first class of rank metric codes are the Gabidulin codes \cite{Del78,Gab85,Ksh05} and they were generalised to the twisted Gabidulin codes \cite{She16,Lun15}. Several decoding algorithms already exist for Gabidulin codes \cite{Gab85,Loi06,Ric04}. For twisted Gabidulin codes, a decoding algorithm exists but only for a particular parameters of the code \cite{Ran17}. Most of the algorithms for Gabidulin codes are using syndrome computation, extended Euclidean algorithm, Berlekamp-Massey algorithm. In this work we will use the Berlekamp-Massey algorithm for a rank metric code again, but in a different way. Namely, suppose that $(c_1,\cdots,c_n)+(e_1,\cdots,e_n)$ is the received vector with $(e_1,\cdots,e_n)$ being the error. We first interpolate the polynomial $f(x)+g(x)$ from the received vector, where $f(x)$ is the message polynomial and $g(x)$ is the polynomial corresponding to the error vector. Due to the form of $f(x)$ (it has degree $q^{k-1}$ at most), we already know some coefficients of $g(x)$ and we will show that these coefficients are enough to recover the whole polynomial $g(x)$. This algorithm can be further modified to be used with all cases of twisted Gabidulin codes. To do these, we will first give a description of Gabidulin codes and twisted Gabidulin codes in Section \ref{sec:2}. Also, we will give a brief description of two decoding algorithms for Gabidulin codes. Then, in Section \ref{sec:3}, we prove a theorem which enables us to build a new decoding algorithm. We will present the new decoding algorithm for Gabidulin codes and we will show its difference with the two algorithms we presented in Section \ref{sec:2}. In Section \ref{sec:4}, we will modify the algorithm to use it with twisted Gabidulin codes. Finally, we will conclude in Section \ref{sec:5}.

\section{Rank metric codes}\label{sec:2}
\begin{defn}\label{defn:1}
Let $\Fn/\F$ be a finite field extension of degree $n$. A linearized polynomial of $q$-degree $k$ is a polynomial of the form
\[
f(x) = f_0 x + f_1 x^q +\cdots + f_{k} x^{q^{k}},
\]
where $f_k\neq 0$ and $f_i\in\Fn$ for any integer $i$ with $0\leq i\leq k$. The set of all these polynomials will be denoted by $\L{}$. If we fix an integer $k$, then $\L{k}$ denotes the set of all linearized polynomials of $q$-degree at most $k-1$.
\end{defn}
\begin{exa}\label{exa:1}
The trace map $\Fn\rightarrow\F$ is the linearized polynomial $\Tr x = x + x^q+ \cdots + x^{q^{n-2}}+x^{q^{n-1}}$
\end{exa}

We know that $\Fn$ is a vector space over $\F$ of dimension $n$. And since $x^{q^i}$ is an $\F$-automorphism of $\Fn$, we see that any linearized polynomial $f(x)\in \L{n}$ is an $\F$-linear map $\Fn\rightarrow\Fn$. In fact, $\L{n}$ is isomorphic to the set of all $(n\times n)$-matrices over $\F$. In this regard, we define the rank of a linearized polynomial to be its rank as an $\F$-linear map on $\Fn$.

Furthermore, if $f(x)$ has $q$-degree $k<n$, then considering it as a polynomial in $\Fn[x]$, it can have $q^k$ roots at most. Therefore  as an $\F$-linear map $\Fn\rightarrow\Fn$, $f(x)$ has a kernel of dimension of $k$ at most. This property allows us to use these polynomials to construct rank metric codes with good property.

For more on the theory of linearized polynomials, one can have a look at \cite{Lid96} Chapter 3.

\begin{defn}\label{defn:2}
Given a finite field $\F$, a rank metric code is a subset $\C$ of $\Mat$ together with the metric defined by $\d(\c_1, \c_2)=\rank(c_1-c_2)$, for $\c_i\in \Mat$. We call it linear if $\C$ is a vector space over $\F$ of dimension $k$ and if furthermore, $d$ is the minimum distance between two distinct codewords of $\C$, then we say that $\C$ is a $[n\times n,k,d]$ linear rank metric code.
\end{defn}

\begin{rem}\label{rem:1}
Alternative representations of a rank metric codes are given by the following:
\begin{enumerate}
\item The code is $\C\subset (\Fn)^n$, where the $\rank$ of a matrix is replaced by the maximum number of $\F$-linearly independent elements in $(c_1,\cdots,c_n)\in(\Fn)^n$. 
\item The code is $\C\subset (\L{n})$, where the $\rank$ of a matrix is replaced by the rank of the linearized polynomial $c(x)$ as an $\F$-linear map.
\end{enumerate} 
\end{rem}

Given a rank metric code and a minimum distance, the upper-bound on the size of the code is given by the following theorem.

\begin{thm}[\cite{Del78}]\label{thm:1}
Let $\C$ be a linear code in $\Mat$, if the minimum distance of $\C$ is equal to a positive integer $d$, then $\sharp \C \leq q^{n(n-d+1)}$. Codes for which the bound is attained (i.e $[n\times n, n(n-d+1), d]$) are called maximum rank distance (MRD) codes.
\end{thm}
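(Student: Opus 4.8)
The plan is to mimic the classical proof of the Singleton bound by means of a puncturing (projection) argument, carried out in the matrix representation $\C\subseteq\Mat$. The guiding observation is that deleting rows from a matrix decreases its rank by at most the number of deleted rows, so the natural move is to project each codeword onto a well-chosen set of rows that is large enough to separate distinct codewords yet small enough to yield the counting bound.

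Concretely, first I would fix the $\F$-linear map $\pi\colon\Mat\to\mathbf{Mat}_{(n-d+1)\times n}(\F)$ that forgets the last $d-1$ rows of a matrix, i.e.\ restricts a codeword to its first $n-d+1$ rows. The key step is to show that $\pi$ is injective on $\C$. Suppose $\pi(\c_1)=\pi(\c_2)$ for $\c_1,\c_2\in\C$. Then $\c_1-\c_2$ has its first $n-d+1$ rows equal to zero, hence at most $d-1$ nonzero rows; since the rank of a matrix never exceeds its number of nonzero rows, we get $\rank(\c_1-\c_2)\leq d-1$. If $\c_1\neq\c_2$ this contradicts $\d(\c_1,\c_2)\geq d$, so $\c_1=\c_2$. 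Injectivity of $\pi$ then gives $\sharp\C\leq\sharp\,\mathbf{Mat}_{(n-d+1)\times n}(\F)=q^{n(n-d+1)}$, which is exactly the claimed bound.

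For the linear case stated here one can phrase the same idea purely in terms of dimensions. Let $\V$ be the subspace of $\Mat$ consisting of matrices whose first $n-d+1$ rows vanish, so that $\dim_{\F}\V=(d-1)n$. Any nonzero element of $\C\cap\V$ would be a codeword of rank at most $d-1<d$, which is impossible, hence $\C\cap\V=\{\0\}$. Then $\dim_{\F}\C+\dim_{\F}\V\leq\dim_{\F}\Mat=n^2$ forces $\dim_{\F}\C\leq n^2-(d-1)n=n(n-d+1)$, and $\sharp\C=q^{\dim_{\F}\C}$ yields the same conclusion.

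I do not expect a serious obstacle. The only points requiring care are the elementary inequality bounding the rank of a matrix by its number of nonzero rows, and the deliberate choice to delete exactly $d-1$ rows so that the residual ambient space has cardinality $q^{n(n-d+1)}$. The argument is the direct rank-metric analogue of the Hamming-metric Singleton bound, with ``nonzero coordinates'' replaced by ``nonzero rows''; MRD codes are then precisely those for which $\pi$ is a bijection onto the full space of $(n-d+1)\times n$ matrices.
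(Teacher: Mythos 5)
Your proof is correct, but there is nothing in the paper to compare it against: the paper states this theorem as a known result, citing Delsarte, and gives no proof at all (the only remark in that spirit is that the Gabidulin construction attains the bound because a nonzero codeword of $q$-degree at most $k-1$ has kernel of dimension at most $k-1$). Your puncturing argument is the standard elementary route to the rank-metric Singleton bound and it is sound: the map $\pi$ that forgets the last $d-1$ rows is injective on $\C$, since a nonzero difference $\c_1-\c_2$ in its kernel would have at most $d-1$ nonzero rows and hence $\rank(\c_1-\c_2)\leq d-1<d$, contradicting the minimum distance; counting then gives $\sharp\C\leq q^{n(n-d+1)}$. Two remarks are worth making. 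First, your injectivity argument never uses linearity of $\C$, so you have in fact proved the bound for arbitrary (possibly nonlinear) rank metric codes, which is more general than the statement as given; Delsarte's original proof proceeds through the machinery of association schemes and inner distributions, so your argument is also considerably more elementary. Second, your linear-algebra variant ($\C\cap\V=\{\0\}$ with $\V$ the $(d-1)n$-dimensional space of matrices supported on the last $d-1$ rows, hence $\dim_{\F}\C\leq n^2-(d-1)n$) is the cleanest way to phrase it for linear codes, and it correctly identifies MRD codes as those for which $\pi$ is bijective onto $\mathbf{Mat}_{(n-d+1)\times n}(\F)$.
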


The first class of linear rank metric code is the family of Gabidulin codes \cite{Del78,Gab85}, given by, 
\[
\C = \L{k} =\left\lbrace a_0 x + \cdots + a_{k-1} x^{q^{k-1}}, a_i\in \Fn \right\rbrace.
\]

To prove that this is MRD, we just use the fact that the kernel has dimension $k-1$ at most and then use the rank nullity theorem.

This construction was generalized by Sheekey in \cite{She16}. Namely, the Class of twisted Gabidulin codes is defined as follows:
\[
\C'_{\n,r} = \left\lbrace a_0 x + \cdots + a_{k-1} x^{q^{k-1}}+\n a_0^{q^r}x^{q^k}, a_i\in \Fn \right\rbrace,
\]
where $\n\in\Fn$ with $\prod_{i=0}^{n-1}\n^{q^i}\neq (-1)^{nk}$ and $r$ is a non-negative integer. This is an MRD code because a codeword cannot have $q^k$ zeroes by the choice of the the coefficients of $x$ and $x^{q^k}$ \cite{She16}.

\begin{rem}\label{rem:2}
The above representations are using linearized polynomials. These are $\F$-linear maps $\Fn\rightarrow \Fn$. To get a representation of the code as a subset of $(\Fn)^n$, we evaluate the code on a fixed basis of the extension $\Fn/\F$. Using this basis, we can also get a representation in the matrix form in $\Mat$.
\end{rem}

\begin{rem}\label{rem:3}
The construction can be generalized by replacing the monomials $x^{q^i}$ by $x^{q^{si}}$ where $(s,n)=1$. They are called generalized Gabidulin codes, see \cite{Ksh05}.
\end{rem}

There are already a lot of decoding algorithms for Gabidulin codes. For twisted Gabidulin codes, there is a decoding algorithm but only for some specific parameters \cite{Ran17}. As we mentioned before, we will give another decoding algorithm for Gabidulin codes and we will show how to modify it to get a decoding algorithm for twisted general Gabidulin codes. In order to see the difference between the existing and our algorithms, we will first show a brief description of two decoding algorithms for an $[n,kn,d]$-Gabidulin codes 
\begin{enumerate}
\item Compute the syndrome vector $r \H^T$, where $\H$ is a parity check matrix of the code given by $h_{i,j} = h_j^{q^i}$. The entries $s_i$ of this vector define a linearized polynomial $S(x)$.
\item Determine two linearized polynomials $L(x)$ and $F(x)$ such that $F(x) = L(x)\circ S(x) \mod z^{q^{d-1}}$. Here, there are two methods: Use Berlekamp Massey \cite{Ric04} or use the extended Euclidean algorithm \cite{Gab85}.
\item Find a basis $\lbrace e_i\rbrace$ of the kernel of $L(x)$.
\item Compute the $\a_i$'s from $\sum_i e_i \a_i^{q^j} = s_j$.
\item Find a matrix $Y$, with $\a_j = \sum_i Y_{j,i} h_i$.
\item Finally the error vector is $eY$, where the entries of $e$ are $e_i$.
\item output the message as $r-e$.
\end{enumerate}

\section{Decoding algorithm for Gabidulin codes}\label{sec:3}
Before we give our new decoding algorithm, we first give the needed tool. We know that linear maps can be decomposed as a sum of several linear maps of dimension one. And this can be shown in the setting of linearized polynomials. In the remaining part of this paper, we will use only linearized polynomials in $\L{n}$.
\begin{lem}\label{lem:1}
Any $\F$-linear map $\Fn\rightarrow \F$ can be represented by $\Tr(ax)$ for a fixed $a\in \Fn$.
\end{lem}
\begin{proof}
The set of $\Tr(ax)$ are obviously $\F$-linear maps $\Fn\rightarrow \F$. The equality comes by looking at the dimension of the space of $\F$-linear maps $\Fn\rightarrow \F$.
\end{proof}
\begin{cor}\label{cor:1}
Let $\l$ be an element of $\Fn$ and let $\l\F$ be the $\F$-subspace of $\Fn$ generated by $\l$. Then any linear map $\Fn\rightarrow \l\F$ has the form $\l\Tr(ax)$ for some $a\in \Fn$.
\end{cor}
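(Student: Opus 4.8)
The plan is to reduce the statement directly to Lemma \ref{lem:1} by rescaling. First I would dispose of the degenerate case $\l = 0$: then $\l\F = \{0\}$, the only $\F$-linear map into it is the zero map, and this coincides with $\l\Tr(ax)$ for any $a$, so the claim holds trivially in this case.

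Assume now $\l \neq 0$. The key observation is that multiplication by $\l$ is an $\F$-linear isomorphism $\F \to \l\F$, with inverse given by multiplication by $\l^{-1}$. Given any $\F$-linear map $\phi : \Fn \to \l\F$, I would form the composite $\psi : \Fn \to \F$ defined by $\psi(x) = \l^{-1}\phi(x)$. Since $\phi(x) \in \l\F$ for every $x$, we have $\l^{-1}\phi(x) \in \F$, so $\psi$ indeed takes values in $\F$; moreover $\psi$ is $\F$-linear, being a composite of $\F$-linear maps.

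Now I would apply Lemma \ref{lem:1} to $\psi$: there exists $a \in \Fn$ with $\psi(x) = \Tr(ax)$. Multiplying back by $\l$ then gives $\phi(x) = \l\psi(x) = \l\Tr(ax)$, which is exactly the desired form.

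There is no serious obstacle here; the only point that needs care is verifying that the rescaled map $\psi$ lands in the base field $\F$ rather than merely in $\Fn$, and this is precisely where the hypothesis $\l \neq 0$ (so that $\l^{-1}$ exists and $\l^{-1}(\l\F) = \F$) is used. As a consistency check, a dimension count agrees: the space of $\F$-linear maps $\Fn \to \l\F$ has $\F$-dimension $n$, matching the $n$ free parameters in the choice of $a \in \Fn$.
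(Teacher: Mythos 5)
Your proof is correct and follows exactly the route the paper intends: the corollary is stated as an immediate consequence of Lemma \ref{lem:1}, and your rescaling argument (composing with multiplication by $\l^{-1}$ to land in $\F$, applying the lemma, then multiplying back by $\l$) is precisely the derivation the paper leaves implicit. Your additional handling of the case $\l = 0$ and the dimension-count sanity check are fine but not needed beyond that.
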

The above representation in the corollary  is of course not unique.
As a consequence of the previous corollary, we have the following theorem.

\begin{thm}\label{thm:2}
Let $f$ be a linearized polynomial of rank $r$, then there are two subsets of $\Fn$ $S_1 = \left\lbrace a_1,\cdots , a_r\right\rbrace$ and $S_2 = \left\lbrace b_1,\cdots , b_r\right\rbrace$ such that they are both linearly independent over $\F$ and that
\[
f(x) =  a_1\Tr(b_1 x) + \cdots  a_r\Tr(b_r x).
\]
\end{thm}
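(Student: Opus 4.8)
The plan is to build the decomposition directly from the image of $f$ and then verify the two independence conditions separately. First I would set $V = \operatorname{Im}(f)$, which is an $\F$-subspace of $\Fn$ of dimension $r$ since $f$ has rank $r$. Choosing any $\F$-basis $a_1,\ldots,a_r$ of $V$ supplies a linearly independent set $S_1$ for free, and gives a direct sum decomposition $V = a_1\F \oplus \cdots \oplus a_r\F$. For each index $i$, let $\pi_i\colon V \to a_i\F$ denote the projection onto the $i$-th summand, so that $v = \sum_{i=1}^r \pi_i(v)$ for every $v \in V$.

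The second step is to invoke Corollary~\ref{cor:1}. Each composite $\pi_i \circ f \colon \Fn \to a_i\F$ is an $\F$-linear map into the line $a_i\F$, so by the corollary there is some $b_i \in \Fn$ with $\pi_i(f(x)) = a_i\Tr(b_i x)$. Summing over $i$ and using $f(x) = \sum_i \pi_i(f(x))$ yields the desired expression $f(x) = a_1\Tr(b_1 x) + \cdots + a_r\Tr(b_r x)$. This already produces $S_2 = \{b_1,\ldots,b_r\}$, so the only remaining task is to establish its linear independence.

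For the last step I would exploit that the $a_i$ are independent, so that for $x \in \Fn$ we have $f(x) = 0$ if and only if $\Tr(b_i x) = 0$ for every $i$. Hence $\ker f$ is exactly the intersection of the hyperplanes $\{x : \Tr(b_i x) = 0\}$, which is the orthogonal complement of $\operatorname{span}_\F(b_1,\ldots,b_r)$ with respect to the non-degenerate trace bilinear form $(x,y)\mapsto \Tr(xy)$. Its dimension is therefore $n - \dim\operatorname{span}_\F(b_1,\ldots,b_r)$. On the other hand, the rank--nullity theorem gives $\dim\ker f = n - r$. Comparing the two forces $\dim\operatorname{span}_\F(b_1,\ldots,b_r) = r$, so the $b_i$ are linearly independent, as required.

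I expect the routine part to be the decomposition itself, which falls out of Corollary~\ref{cor:1} essentially by reading off coordinates in the image. The main obstacle, and the only step that genuinely uses the rank hypothesis, is the independence of $S_2$; I would handle it through the non-degeneracy of the trace form together with rank--nullity. An equivalent route for that step is to observe that the functionals $x \mapsto \Tr(b_i x)$ are the coordinate functionals of $f$ relative to the basis $\{a_i\}$ of its image, and these must be $\F$-linearly independent precisely because $f$ surjects onto its $r$-dimensional image $V$.
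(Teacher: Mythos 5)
Your proof is correct. The first two steps --- taking a basis $a_1,\ldots,a_r$ of the image, decomposing $f$ via the projections onto each line $a_i\F$, and invoking Corollary~\ref{cor:1} to write each projection as $a_i\Tr(b_i x)$ --- coincide with the paper's argument. Where you diverge is the independence of the $b_i$'s. The paper argues by contradiction and direct manipulation: assuming $b_1=\m_2 b_2+\cdots+\m_r b_r$ with $\m_i\in\F$, it expands $\Tr(b_1x)$ and regroups terms to rewrite $f$ as a sum of only $r-1$ rank-one maps, forcing $\rank(f)\leq r-1$. You instead use duality: since the $a_i$ are independent, $\ker f$ is the common kernel of the functionals $x\mapsto\Tr(b_ix)$, i.e.\ the orthogonal complement of $\operatorname{span}_{\F}(b_1,\ldots,b_r)$ under the nondegenerate trace form, and rank--nullity then pins down that span's dimension as $r$. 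Both are valid; the paper's version is more elementary and self-contained (it only reuses the rank hypothesis and the $\F$-linearity of the $\m_i$'s), while yours is slicker and makes the symmetry between $S_1$ and $S_2$ more transparent, at the cost of invoking nondegeneracy of the trace pairing --- a standard fact that is, in any case, essentially the content of the paper's own Lemma~\ref{lem:1}.
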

\begin{proof}
Since $f$ is of rank $r$, then we choose $\left\lbrace a_1,\cdots , a_r\right\rbrace$ to be a generator of the image of $f$ as a linear map. By Corollary \ref{cor:1}, each projection of $f$ onto the subspace $\left< a_i\right>$ has the form $a_i\Tr(b_i x)$. Thus we get the desired form of $f$. What remains to show is the linear independence of the $b_i$'s. Without loss of generality, say $b_1 =\m_2 b_2 + \cdots + \m_r b_r$, with $\m_i\in\F$. Then
\begin{align*}
f(x) & =  a_1\left((\Tr ((\m_2 b_2 + \cdots + \m_r b_r)x)\right)) + a_2\Tr(b_2 x) + \cdots + a_r\Tr(b_r x) \\
& = a_1\Tr (\m_2 b_2 x) + \cdots + a_1\Tr (\m_r b_rx) + a_2\Tr(b_2 x) + \cdots +  a_r\Tr(b_r x) \\
& =(a_2+a_1\m_2)\Tr(b_2 x) + \cdots +  (a_r+a_1\m_r)\Tr(b_r x).
\end{align*}
Thus rank of $f$ is at most $r-1$ which is a contradiction.
\end{proof}

From Theorem \ref{thm:2}, we get the following corollary.

\begin{cor}\label{cor:2}
Let $f(x)$ be a linearized polynomial of rank $r$ over the field extension $\Fn/\F$ such that
\[
f(x) = f_0 x + f_1 x^q +\cdots + f_{n-1} x^{q^{n-1}}.
\]
Then there are two subsets of $\Fn$ $S_1 = \left\lbrace a_1,\cdots , a_r\right\rbrace$ and $S_2 = \left\lbrace b_1,\cdots , b_r\right\rbrace$ such that they are both linearly independent over $\F$, and  for all integer $i$ such that $0\leq i\leq n-1$,
\[
f_i =  \sum_{j=1}^r b_j^{q^i} a_j.
\]
\end{cor}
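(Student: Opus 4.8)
The plan is to start from the representation guaranteed by Theorem~\ref{thm:2}, namely
\[
f(x) = a_1\Tr(b_1 x) + \cdots + a_r\Tr(b_r x),
\]
where $S_1=\{a_1,\dots,a_r\}$ and $S_2=\{b_1,\dots,b_r\}$ are each linearly independent over $\F$. The whole corollary is really just an exercise in expanding this expression into the standard linearized-polynomial form $f(x)=\sum_{i=0}^{n-1} f_i x^{q^i}$ and reading off the coefficients.

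First I would substitute the explicit formula for the trace from Example~\ref{exa:1}, writing $\Tr(b_j x) = \sum_{i=0}^{n-1} (b_j x)^{q^i}$. Since the Frobenius $y\mapsto y^{q^i}$ is a field homomorphism on $\Fn$, we have $(b_j x)^{q^i} = b_j^{q^i} x^{q^i}$, so that
\[
a_j \Tr(b_j x) = a_j \sum_{i=0}^{n-1} b_j^{q^i} x^{q^i} = \sum_{i=0}^{n-1} \left( b_j^{q^i} a_j\right) x^{q^i}.
\]
Summing this identity over $j=1,\dots,r$ and interchanging the two finite sums gives
\[
f(x) = \sum_{i=0}^{n-1} \left( \sum_{j=1}^r b_j^{q^i} a_j \right) x^{q^i}.
\]

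Finally I would invoke the uniqueness of the coefficients of a linearized polynomial: the monomials $x, x^q, \dots, x^{q^{n-1}}$ are $\Fn$-linearly independent as maps (equivalently, a linearized polynomial of $q$-degree at most $n-1$ is determined by its coefficients), so matching the coefficient of $x^{q^i}$ in the two expressions for $f(x)$ yields exactly $f_i = \sum_{j=1}^r b_j^{q^i} a_j$ for every $i$ with $0\leq i\leq n-1$. This is the claimed formula.

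The only genuine content beyond bookkeeping is the appeal to Theorem~\ref{thm:2} for the existence of the two linearly independent sets $S_1, S_2$, which is already established; and the uniqueness of linearized-polynomial coefficients, which I expect to be the one point needing a word of justification. I do not anticipate a real obstacle here, since the Frobenius-additivity and multiplicativity used in the expansion are standard properties of linearized polynomials over $\Fn/\F$.
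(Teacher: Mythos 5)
Your proposal is correct and is exactly the computation the paper has in mind (the paper states the corollary as an immediate consequence of Theorem~\ref{thm:2} without writing out the expansion): substitute $\Tr(b_j x)=\sum_{i=0}^{n-1} b_j^{q^i}x^{q^i}$, swap the sums, and compare coefficients. The appeal to uniqueness of the coefficients is justified since a nonzero linearized polynomial of $q$-degree at most $n-1$ has degree less than $q^n$ and so cannot vanish on all of $\Fn$.
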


\begin{defn}\label{defn:3}
Let $f(x) = f_0 x + f_1 x^q +\cdots + f_{n-1} x^{q^{n-1}}$ be a linearized polynomial. The Dickson matrix associated to $f(x)$ is the matrix 
\[
\M = \begin{pmatrix}
f_0 & f_{n-1}^q & \hdots & f_1^{q^{n-1}} \\ 
f_1 & f_{0}^q & \hdots & f_2^{q^{n-1}} \\ 
\vdots & \vdots & \ddots & \vdots \\
f_{n-1} & f_{n-2}^q & \hdots & f_0^{q^{n-1}} \\ 
\end{pmatrix}.
\]
\end{defn}

Another matrix related to linearized polynomials is the Moore matrix. 

\begin{defn}
Given $\left\lbrace a_1,\cdots, a_k\right\rbrace\subset \Fn$, the Moore matrix associated to the $a_i$'s is the matrix
\[
\begin{pmatrix}
a_1 & a_2 & \hdots & a_k \\
a_1^q & a_2^q & \hdots & a_k^q \\
\vdots & \vdots & \ddots & \vdots \\
a_1^{q^{k-1}} & a_2^{q^{k-1}} & \hdots & a_k^{q^{k-1}} \\
\end{pmatrix}
\]
\end{defn}
It is well known that the above Moore matrix is invertible if and only if the $a_i$'s are linearly independent over $\F$.

As a consequence of Corollary \ref{cor:2}, we have the following theorem.

\begin{thm}\label{thm:3}
Let $f(x)$ be a linearized polynomial of rank $r$ over the field extension $\Fn/\F$ such that
\[
f(x) = f_0 x + f_1 x^q +\cdots + f_{n-1} x^{q^{n-1}}.
\]

Let $\M_1,\cdots,\M_n$ be the rows of the matrix $\M$ as in Definition \ref{defn:3}.

Then we have the following property:
\begin{enumerate}[(i)]
\item The matrix $\M$ is of rank $r$.
\item Any $r$ successive rows $\M_i,\cdots,\M_{i+r}$ are linearly independent and the other rows are linear combinations of them.
\item All $(r\times r)$-matrices $\left(M_{i,j}\right)_{(i \mod n,j \mod n)};\; l_1\leq i\leq l_1+r,\; l_2\leq j\leq l_2+r$ with $0\leq l_i\leq n-1$ are invertible.
\end{enumerate}
\end{thm}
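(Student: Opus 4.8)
The plan is to convert Corollary \ref{cor:2} into a \emph{rank factorization} of the Dickson matrix $\M$ and then read off all three claims from the invertibility of Moore matrices built from linearly independent sets. First I would record the entrywise formula: writing $M_{k,l}$ for the entry in row $k$ and column $l$ (with $0\le k,l\le n-1$), Definition \ref{defn:3} gives $M_{k,l}=f_{(k-l)\bmod n}^{\,q^l}$. Substituting the expression from Corollary \ref{cor:2} and using that Frobenius has order $n$ on $\Fn$ (so exponents may be read modulo $n$), a one-line computation collapses the outer $q^l$-th power:
\[
M_{k,l}=\Big(\sum_{j=1}^r b_j^{q^{k-l}}a_j\Big)^{q^l}=\sum_{j=1}^r b_j^{q^k}\,a_j^{q^l}.
\]
This is precisely the $(k,l)$ entry of $B\,A^{\mathsf{T}}$, where $B$ is the $n\times r$ Moore matrix with $B_{k,j}=b_j^{q^k}$ and $A$ is the $n\times r$ Moore matrix with $A_{l,j}=a_j^{q^l}$. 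Thus I obtain the factorization $\M=B\,A^{\mathsf{T}}$ over $\Fn$.

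The key structural fact I would then exploit is that \emph{every} window of $r$ cyclically consecutive rows of $B$ (and likewise of $A$) is an invertible Moore matrix. Rows $i,i+1,\ldots,i+r-1$ of $B$ form the Moore matrix of the set $\{b_1^{q^i},\ldots,b_r^{q^i}\}$; since $\{b_1,\ldots,b_r\}$ is $\F$-linearly independent and Frobenius fixes $\F$, this Frobenius-twisted set is still $\F$-linearly independent, so by the stated Moore-matrix criterion its matrix is invertible. The wraparound windows (for instance rows $n-1,0,1,\ldots$) are handled identically, because $b_j^{q^n}=b_j$ makes the exponents in such a window genuinely $r$ consecutive powers of $q$. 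In particular both $B$ and $A$ have full column rank $r$.

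With the factorization established, each part is immediate. For (i), full column rank of $B$ means it is injective on column spaces, so $\rank(\M)=\rank(B\,A^{\mathsf{T}})=\rank(A^{\mathsf{T}})=r$. For (ii), the $r$ rows of $\M$ indexed $i,\ldots,i+r-1$ equal $B_{[i]}\,A^{\mathsf{T}}$, where $B_{[i]}$ is the invertible $r\times r$ row-window of $B$; left-multiplication by an invertible matrix preserves rank, so these rows are linearly independent, and since the whole row space has dimension $r$ by (i), every remaining row is a linear combination of them. For (iii), the submatrix on rows $i,\ldots,i+r-1$ and columns $l,\ldots,l+r-1$ factors as $B_{[i]}\,(A_{[l]})^{\mathsf{T}}$, a product of two invertible Moore windows, hence invertible.

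I expect the only point needing genuine care to be the cyclic indexing in (ii) and (iii): one must check that a window of rows or columns that wraps past index $n-1$ still corresponds to $r$ \emph{consecutive} Frobenius powers applied to a linearly independent set, so that the Moore-matrix invertibility argument applies uniformly to all $n$ windows rather than only to the initial block. Once the identity $\M=B\,A^{\mathsf{T}}$ is in place, everything else is routine bookkeeping.
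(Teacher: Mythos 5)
Your proposal is correct and follows essentially the same route as the paper: both rest on the rank factorization $\M=\B\A$ of the Dickson matrix into the two Moore matrices supplied by Corollary \ref{cor:2}, with all three claims then read off from the invertibility of every window of $r$ cyclically consecutive rows of $\B$ (respectively columns of $\A$). Your write-up merely fills in the details the paper compresses into ``all statements of the theorem follow from these facts,'' namely the entrywise verification of the factorization, the Frobenius-twist argument for the wraparound windows, and the deduction of (i)--(iii).
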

\begin{proof}
From Corollary \ref{cor:2}, one sees that $\M=\B\A$, where
\[
\B= 
\begin{pmatrix}
b_1 & b_2 & \hdots & b_r \\
b_1^q & b_2^q & \hdots & b_r^q \\
\vdots & \vdots & \ddots & \vdots \\
b_1^{q^{n-1}} & b_2^{q^{n-1}} & \hdots & b_r^{q^{n-1}} \\
\end{pmatrix}
\text{ and }
\A=
\begin{pmatrix}
a_1 & a_1^q & \hdots & a_1^{q^{n-1}} \\
a_2 & a_2^q & \hdots & a_2^{q^{n-1}} \\
\vdots & \vdots & \ddots & \vdots \\
a_r & a_r^q & \hdots & a_r^{q^{n-1}} \\
\end{pmatrix}
\]
Since the $a_i$'s are linearly independent over $\F$ and the same for the $b_i's$, we see that each $r$ successive rows of $B$ and any $r$ successive columns of $A$ constitute invertible matrices. All statements of the theorem follow from these facts.
\end{proof}

It is this theorem that is important to us. This enables us to build a new decoding algorithm.

We are now ready to explain the decoding algorithm. It consists of two steps. The first part is to interpolate the received message to construct the polynomial $f(x)+g(x)$, where $f(x)$ is the message polynomial and $g(x)$ is the error polynomial. Since $f(x)$ is of degree $k-1$ at most, we should know the coefficient of $x^{q^i}$ in $g(x)$, $\forall i\geq k$. We will show that these coefficients are actually enough to recover the whole polynomial $g(x)$ with some condition on the rank of $g(x)$.

\subsection{Polynomial interpolation}

First of all, depending on the representation of the code, we need to do some interpolation to get a linearized polynomial form. Assume that our encoding was given by 
\begin{align*}
\L{k} &\rightarrow (\Fn)^n \\
f(x) &\mapsto (c_1,c_2,\cdots,c_n),
\end{align*}
where $\left\lbrace a_1,a_2,\cdots,a_n\right\rbrace$ is a fixed basis of $\Fn/\F$ and $c_i = f(a_i)$.

We assume that an error of $\e =(e_1,e_2,\cdots,e_n)$ was added to the original codeword and suppose that $\rank(e) = t < \frac{n-k+1}{2}$. Therefore $(r_1,r_2,\cdots,r_n)$ was received with $r_i = c_i + e_i$.

Let $\U$ be the Moore matrix
\[
\U =
\begin{pmatrix}
a_1 & a_1^q & \hdots & a_1^{q^{n-1}} \\
a_2 & a_2^q & \hdots & a_2^{q^{n-1}} \\
\vdots & \vdots & \ddots & \vdots \\
a_n & a_n^q & \hdots & a_n^{q^{n-1}} \\
\end{pmatrix}
\]
Then
\[
\U\begin{pmatrix}
f_0 + g_0 \\
\vdots \\
f_{n-1} + g_{n-1}
\end{pmatrix}
=
\begin{pmatrix}
r_1 \\
\vdots\\
r_n
\end{pmatrix}
\]
where $g(x) = \sum g_i x^{q^i}$ is the error polynomial corresponding to $\e$ i.e. $g(a_i) = e_i$. Obviously, $g(x)$ as an $\F$-linear map has rank $t< \frac{n-k+1}{2}$.

Thus, we may compute $\U^{-1}$ in advance and then compute
\[
\U^{-1}\begin{pmatrix}
r_1 \\
\vdots\\
r_n
\end{pmatrix}.
\]
This gives us $f_0+g_0,\cdots, f_{n-1}+g_{n-1}$. Since $f_i = 0, \forall i\geq k$, we now know the values of $g_k,\cdots g_{n-1}$. In the next step, we will use these coefficients to recover the other coefficients of $g(x)$.

\subsection{Polynomial reconstruction}

Let us have a look at the matrix $\M$ in Theorem \ref{thm:3} from the error polynomial $g(x)$. We consider its submatrix
\[
\W = \begin{pmatrix}
g_0 & g_{n-1}^q & \hdots & g_{k+t-1}^{q^{n-(k+t-1)}} & \hdots & g_k^{q^{n-k}} & \hdots & g_1^{q^{n-1}} \\ 
g_1 & g_{0}^q & \hdots & g_{k+t}^{q^{n-(k+t-1)}} & \hdots & g_{k+1}^{q^{n-k}} & \hdots & g_2^{q^{n-1}} \\ 
\vdots & \vdots & \ddots & \vdots & \ddots & \vdots & \ddots & \vdots \\
g_{t-1} & g_{t-2}^q & \hdots & g_{k+2t-2}^{q^{n-(k+t-1)}} & \hdots & g_{k+t-1}^{q^{n-k}} & \hdots & g_t^{q^{n-1}} \\ 
g_t & g_{t-1}^q & \hdots & g_{k+2t-1}^{q^{n-(k+t-1)}} & \hdots & g_{k+t}^{q^{n-k}} & \hdots & g_{t+1}^{q^{n-1}}
\end{pmatrix}.
\]
We know that the $t$ last rows are linearly independent and that the first row should be a linear combination of the $t$ last rows. 

This gives us an equation of the form
\begin{equation}\label{eqn:1}
(\l_0,\cdots,\l_t)  \begin{pmatrix}
g_{k+t-1}^{q^{n-(k+t-1)}} & \hdots & g_k^{q^{n-k}} \\ 
g_{k+t}^{q^{n-(k+t-1)}} & \hdots & g_{k+1}^{q^{n-k}} \\ 
\vdots & \ddots & \vdots \\
g_{k+2t-2}^{q^{n-(k+t-1)}} & \hdots & g_{k+t-1}^{q^{n-k}}\\ 
g_{k+2t-1}^{q^{n-(k+t-1)}} & \hdots & g_{k+t}^{q^{n-k}}
\end{pmatrix} = \0
\end{equation}
where we may assume that $\l_0=1$.

Notice that by the interpolation step, we know the coefficients $g_k,\cdots,g_{n-1}$. Since we suppose that $t<\frac{n-k+1}{2}$, then $k+2t-1\leq n-1$. Thus we know all the coefficients $g_k,\cdots,g_{k+2t-1}$. And thus, by Theorem \ref{thm:3}, this equation has unique solution in $\ll$ which we can compute. This can be done for example by using matrix inversion but that will take $\mathcal{O}(t^3)$ operations. 

Similarly to the case of Reed-Solomon codes, we can do better. Namely, we have here a Toeplitz-like matrix. And this can actually be solved by using a Berlekamp-Massey-like algorithm from \cite{Ric04}. To see this let $u_i = g_i^{q^{n-i}}$. Therefore, Equation \eqref{eqn:1} becomes

\begin{equation}\label{eqn:2}
(\l_0,\cdots,\l_t)  \begin{pmatrix}
u_{k+t-1} & \hdots & u_k \\ 
u_{k+t}^{q} & \hdots & u_{k+1}^{q} \\ 
\vdots & \ddots & \vdots \\
u_{k+2t-2}^{q^{t-1}} & \hdots & u_{k+t-1}^{q^{t-1}}\\ 
u_{k+2t-1}^{q^{t}} & \hdots & u_{k+t}^{q^{t}}
\end{pmatrix} = \0
\end{equation}

We want to find $\l_1,\cdots,\l_t$ from the sequence $(u_{k+2t-1}, \cdots, u_{k+t}, \cdots, u_k)$. Equation \eqref{eqn:2} is exactly the form of recurrence shown in \cite{Ric04}. In that paper, they gave an algorithm for solving Equation \eqref{eqn:2}. We will give the algorithm in Algorithm \ref{algo:1}. We set $d=n-k+1$.

\begin{algorithm}[ht!]
\caption{Berlekamp-Massey}\label{algo:1}
\begin{algorithmic}[1]
\Procedure{BERLEKAMP-MASSEY}{$s_0,\cdots,s_{2t-1}$}
\State $L\gets 0$
\State $\Lambda^{(0)}(x) \gets x$
\State $B^{(0)}(x) \gets x$
\State $i \gets 0$
\While{$i\leq d-2$}
\State $\Delta_i \gets s_i + \sum_{j=1}^L\l_j^{(i)}s_{i-j}^{q^j}$
\State $\Lambda^{(i+1)} \gets \Lambda^{(i)}-\Delta_i x^{q}\circ B^{(i)}(x)$
\If{$\Delta_i == 0$}
\State $B^{(i+1)}(x) \gets x^q\circ B^{(i)}(x)$
\Else
\If{$2L>i$}
\State $B^{(i+1)}(x) \gets x^q\circ B^{(i)}(x)$
\Else
\State $B^{(i+1)}(x) \gets\Delta_i^{-1}\Lambda^{(i)(x)}$
\State $L \gets i+1-L$
\EndIf
\EndIf
\State $i\gets i+1$
\EndWhile
\State \Return $\Lambda^{(i)}(x)$
\EndProcedure
\end{algorithmic}
\end{algorithm}

In this algorithm $\Lambda^{(i)}(x)=\sum_j\l^{(i)}_j x^{q^j}$ and at the end of the algorithm, we will just collect the coefficient of the $\Lambda^{(i)}(x)$ to get our $\l_i$. Notice that on input we take $(s_0,\cdots,s_{2t-1}) = (u_{k+2t-1},\cdots,u_k)$.

We summarize our decoding algorithm with the following steps in Algorithm \ref{algo:2}. Suppose $(r_1,r_2,\cdots,r_n)$ was received with an error of rank $t<\frac{n-k+1}{2}$. We already know the matrix $\U^{-1}$ in advance.

\begin{algorithm}[ht!]\caption{Decoding algorithm}\label{algo:2}
Input: $(r_1,\cdots,r_n)$
\begin{enumerate}[(1)]
\item Compute
\[
\begin{pmatrix}
f_0+g_0 \\
\vdots \\
f_{k-1}+g_{k-1} \\
g_k \\
\vdots \\
g_{n-1}
\end{pmatrix}
= \U^{-1}
\begin{pmatrix}
r_1 \\
\vdots \\
r_n
\end{pmatrix}
\]
\item Use the Berlekamp-Massey-like Algorithm \ref{algo:1} to get the $\l_i$'s.
\item Use the fact that the first row of the matrix $\W$ is a linear combination of the remaining rows, using the $\l_i$'s, to recursively compute the remaining coefficients of $g(x)$. This is just like recursively computing elements of a sequence but the difference with linear-feedback shift register is that the steps also involves raising to some power of $q$.
\item Output  the message as $(f+g)(x)-g(x)$.
\end{enumerate}
\end{algorithm}
\subsection{Complexity and comparison with other algorithms}
All the three first steps of Algorithm \ref{algo:2} have quadratic complexity i.e they can be done in $\mathcal{O}(n^2)$ operations in $\Fn$. The last step is a linear operation. Thus in general we have an algorithm with $\mathcal{O}(n^2)$ operations in $\Fn$.

We already saw two decoding algorithms in Section \ref{sec:2}. As we can see, there is a difference in the first steps of these algorithms and our algorithm. Instead of using an $((n-k)\times n)$ matrix for computing the syndromes, we use an $(n\times n)$ matrix to interpolate $f+g$. So in the first step, we have some extra $\mathcal{O}(nk)$ extra multiplications. The second steps are more or less the same as they are either Berlekamp-Massey or extended Eulidean algorithm. The last steps are where we may get the advantage as we directly use a linear recurrence to recover the error polynomial. For the other algorithms in Section \ref{sec:2}, one first needs to compute the roots of some polynomials (error locator polynomial) before one can reconstruct the error vectors using some relations. 

\section{Extension to twisted Gabidulin codes}\label{sec:4}
In this section, we will explain that our algorithm can also be modified to get a decoding algorithm for twisted Gabidulin codes. And in contrary to the algorithm in \cite{Ran17}, we can do it for any parameters. We assume that the original message was given by
\[
f(x) = f_0 x +\cdots + f_{k-1}x^{q^{k-1}}+\n f_0^{q^r}x^{q^k}.
\]
After the interpolation step, we get the polynomial $f(x)+g(x)$. In opposite to the case of Gabidulin codes, we do not know the value of $g_k$ from this. However the problem we are faced remains similar. We want to find a linear relations between the rows of 
\[
\W = \begin{pmatrix}
g_0 & g_{n-1}^q & \hdots & g_{k+t-1}^{q^{n-(k+t-1)}} & \hdots & g_k^{q^{n-k}} & \hdots & g_1^{q^{n-1}} \\ 
g_1 & g_{0}^q & \hdots & g_{k+t}^{q^{n-(k+t-1)}} & \hdots & g_{k+1}^{q^{n-k}} & \hdots & g_2^{q^{n-1}} \\ 
\vdots & \vdots & \ddots & \vdots & \ddots & \vdots & \ddots & \vdots \\
g_{t-1} & g_{t-2}^q & \hdots & g_{k+2t-2}^{q^{n-(k+t-1)}} & \hdots & g_{k+t-1}^{q^{n-k}} & \hdots & g_t^{q^{n-1}} \\ 
g_t & g_{t-1}^q & \hdots & g_{k+2t-1}^{q^{n-(k+t-1)}} & \hdots & g_{k+t}^{q^{n-k}} & \hdots & g_{t+1}^{q^{n-1}}
\end{pmatrix}.
\]
where we know the values $g_{k+1},\cdots,g_{n-1}$ and $\n g_0^{q^r}-g^{q^k}$. We will see that we still can solve this problem.  We have an equation of the form
\begin{equation*}
(\l_0,\cdots,\l_t)  \begin{pmatrix}
g_{k+t}^{q^{n-(k+t)}} & g_{k+t-1}^{q^{n-(k+t-1)}} & \hdots & g_k^{q^{n-k}} \\ 
g_{k+t+1}^{q^{n-(k+t)}} & g_{k+t}^{q^{n-(k+t-1)}} & \hdots & g_{k+1}^{q^{n-k}} \\ 
\vdots & \ddots & \vdots \\
g_{k+2t-1}^{q^{n-(k+t)}} & g_{k+2t-2}^{q^{n-(k+t-1)}} & \hdots & g_{k+t-1}^{q^{n-k}}\\ 
g_{k+2t}^{q^{n-(k+t-1)}} & g_{k+2t-1}^{q^{n-(k+t-1)}} & \hdots & g_{k+t}^{q^{n-k}}
\end{pmatrix}
 = \0.
\end{equation*}
Notice that we introduce one more columns in the equation. Again, by assumption, we have $2t<n-k+1$. Thus $k+2t\leq n$. If $k+2t<n$, then a Berlekamp-Massey algorithm using the columns of the previous matrix except the last column is enough to compute the $\l_i$'s. If $k+2t=n$, then the equation becomes,
\begin{equation}\label{eqn:3}
(\l_0,\cdots,\l_t)  \begin{pmatrix}
g_{k+t}^{q^{n-(k+t)}} & g_{k+t-1}^{q^{n-(k+t-1)}} & \hdots & g_k^{q^{n-k}} \\ 
g_{k+t+1}^{q^{n-(k+t)}} & g_{k+t}^{q^{n-(k+t-1)}} & \hdots & g_{k+1}^{q^{n-k}} \\ 
\vdots & \ddots & \vdots \\
g_{n-1}^{q^{n-(k+t)}} & g_{k+2t-2}^{q^{n-(k+t-1)}} & \hdots & g_{k+t-1}^{q^{n-k}}\\ 
g_{0}^{q^{n-(k+t)}} & g_{n-1}^{q^{n-(k+t-1)}} & \hdots & g_{k+t}^{q^{n-k}}
\end{pmatrix}
=\0,
\end{equation}
where two entries in terms of $g_{k}$ and $g_{0}$ are unknown.

If we use the columns of the matrix except the first and last columns, then we should have an underdetermined system of linear equations whose solution space is of dimension two. We assume that two linearly independent solutions are $\ll$ and $\ll'$. They can be found using the Berlekamp-Massey like algorithm again. Thus a solution of equation \eqref{eqn:3} is of the form $\ll+A\ll'$ for some $A\in\Fn$.
Using this with the first column and the last column, we get two equations. Furthermore, we also know $\n g_0^{q^r}-g_k$. So in total we get a system of three equations with three unknowns,
\begin{equation}\label{eqn:4}
\begin{cases}
\ h_0+h_1A+(h_2+h_3A)g_0^{q^{n-(k+t)}} &= 0 \\
\ h_4+h_5A+(h_6+h_7A)g_k^{q^{n-k}} &= 0 \\
\ h_8+\n g_0^{q^r}-g_k &= 0 
\end{cases}.
\end{equation}
In this system, we know all the $h_i$ and $g_0,g_k,A$ are unknown. Notice that any solution of the system of equation \eqref{eqn:4} is actually a solution of the decoding algorithm. By the unique decoding property, there can only be one solution of this system.

To solve the system, we use the third equation in the two first equations and we get
\begin{equation}\label{eqn:5}
\begin{cases}
\ s_0+s_1A+(s_2+s_3A)g_0^{q^{i}} &= 0 \\
\ s_4+s_5A+(s_6+s_7A)g_0^{q^{j}} &= 0 
\end{cases}
\end{equation}
with  the $s_i$'s known. We can further reduce this into one variable equation of the form, for some integer $l$,
\[
\frac{s_0+s_1A}{s_2+s_3A} = \frac{s_4'+s_5'A^{q^l}}{s_6'+s_7'A^{q^l}}.
\]
We want to point out that this form of equation was also obtained in \cite{Ran17}. However, in our case here, we are sure that any solution would give us the closest codeword to the received message. We have now reduced the problem to solving the polynomial equation of the form
\[
P(A) = u_0 + u_1A +u_2A^{q^l}+A^{q^l + 1} = 0.
\]
We distinguish three cases:
\begin{itemize}
\item If $u_0 = u_1u_2$, then we can factor $P(A) = (A^{q^l}+u_1)(A+u_2)$.
\item If $u_1=u_2^q$, then 
\begin{align*}
P(A) &= u_0 + u_2^{q^l}A +u_2A^{q^l}+A^{q^l + 1}\\
&= u_0 -u_2^{q^l}u_2+(A+u_2)^{q^l+1}
\end{align*}
\item If $u_0 \neq u_1u_2$ and $u_1\neq u_2^q$, then, from \cite{Blu04}, by a change of variable $y=(u_2u_1-u_0)(u_1-u_2^{q^l})^{-1}A-u_2$, we will get a polynomial equation of the form 
\[
Q(y) = y^{q^l+1}-vy+v = 0
\]
with $v=(u_1-u_2^{q^l})^{q^l+1}/(u_0-u_2u_1)^{q^l}$.
\end{itemize}

First of all, it is easy to show that if we get $A$ from $P(A)$, then we can use equation \eqref{eqn:5} to get $g_0$. And we use equation \eqref{eqn:4} to get $g_k$. These will give us the error polynomial $g(x)$ with the recurrence relation from equation \eqref{eqn:3}. So, normally, there should be only one unique solution for $A$. Now the question is how do we solve the equation $P(A)=0$? Any of the three cases which produce multiple solutions should be ruled out. The first case of $P(A)$ is easy to solve. The two last cases reduce to polynomials of the form
\[
P(X) =X^{q^l+1}+aX+b. 
\]
The number of roots of such polynomials was studied in \cite{Blu04}. Here we will give a method to find these roots. 

Suppose that $y_2$ is a root of $P(X)$. Then set $b=-y_2^{q^l+1}-ay_2$ and choose $y_1=-a-y_2^{q^l}$. thus $b=y_2y_1$. We get
\begin{align*}
(x^{q^l} - y_1x)\circ(x^{q^l} - y_2x)& = x^{q^{2l}}-y_2^{q^l}x^{q^l}-y_1x^{q^l}+y_1y_2x\\
&=x^{q^{2l}}+ax^q+bx.
\end{align*}
The converse is also true. So, to get the root of $P(X)$, we just need to factor the linearized polynomial $x^{q^{2l}}+ax^q+bx$. In case this polynomial admits a root $x_0$ in $\Fn$ then we just take $y_2=x_0^{q-1}$. Otherwise, we will need to use a factorization algorithm like in \cite{Gie98}.

Once $A$ is computed, we can compute $g_0$ and $g_k$. Then we continue the decoding algorithm with the same methods as with the Gabidulin codes.

\begin{rem}\label{rem:4}
These algorithms can be easily modified to get a decoding algorithm for generalized (twisted) Gabidulin codes. Namely instead of working with the field automorphism $x^q$, we work with automorphisms of the form $x^{q^s}$. 
\end{rem}

\section{Conclusion}\label{sec:5}
In this work we have given a new decoding algorithm for Gabidulin codes. First, instead of computing syndromes, we do some polynomial interpolation. Our algorithm requires more computations in this first steps but we can compensate this in the last steps. Namely, there is no need to find roots of some ``error locator polynomial''. We just need to use a recurrence relation to recover the ``error polynomial'' after using a Berlekamp-Massey-like algorithm. We gave a brief analysis on the complexity and a comparison of our algorithm to some existing decoding algorithms for Gabidulin codes. Furthermore,  we show that our algorithm can be modified to get a general decoding algorithm for twisted Gabidulin codes. 

Finally, we think that it is possible to get a version of our algorithm for Reed-Solomon codes. Namely we can use an equivalent of the Dickson matrix. In the case of Reed-Solomon codes, we have a circulant matrix. And a theorem of K\"onig-Rados gives a relation between the number of non-zero roots of a polynomial and the rank of some circulant matrix, see \cite{Lid96}, Chapter 6, Section 1. It is known that the most expensive steps in the decoding of Reed-Solomon codes is finding roots of the error locator polynomials. This can be avoided in our algorithm.

We have seen that our algorithm involves factoring linearized polynomial of degree $2$. It is well known that factoring a regular polynomial of degree $2$ can be done by computing the discriminant of the polynomial. The algorithm presented in \cite{Gie98} gives a factorization for linearized polynomials of general degree, we could further simplify our algorithm if we would have a discriminant like method to factorize a degree $2$ linearized polynomial.
\section*{Aknowledgement}

I would like to thank Anna-Lena Horlemann-Trautmann and Joachim
Rosenthal for their valuable comments and suggestions on this work.

\FloatBarrier
\bibliography{reference}

\begin{thebibliography}{LTZ15}

\bibitem[Blu04]{Blu04}
A.~W. Bluher.
\newblock On $x^{q+1}+ax+b$.
\newblock {\em Finite fields and their applications}, 10(3):285 -- 305, 2004.

\bibitem[Del78]{Del78}
P.~Delsarte.
\newblock {Bilinear forms over a finite field, with applications to coding
  theory}.
\newblock {\em Journal of Combinatorial Theory, Series A}, 25(3):226 -- 241,
  1978.

\bibitem[Gab85]{Gab85}
E.M. Gabidulin.
\newblock {Theory of codes with maximum rank distance.}
\newblock {\em {Probl. Inf. Transm.}}, 21:1--12, 1985.

\bibitem[Gie98]{Gie98}
M.~Giesbrecht.
\newblock Factoring in skew-polynomial rings over finite fields.
\newblock {\em Journal of Symbolic Computation}, 26(4):463 -- 486, 1998.

\bibitem[KG05]{Ksh05}
A.~Kshevetskiy and E.~Gabidulin.
\newblock The new construction of rank codes.
\newblock In {\em Proceedings. International Symposium on Information Theory,
  2005. ISIT 2005.}, pages 2105--2108, Sept 2005.

\bibitem[LN96]{Lid96}
R.~{Lidl} and H.~{Niederreiter}.
\newblock {\em {Finite fields. 2nd ed.}}
\newblock Cambridge: Cambridge Univ. Press, 2nd ed. edition, 1996.

\bibitem[Loi06]{Loi06}
P.~Loidreau.
\newblock {\em {A Welch-Berlekamp like algorithm for decoding Gabidulin
  codes}}, pages 36--45.
\newblock Springer Berlin Heidelberg, Berlin, Heidelberg, 2006.

\bibitem[LTZ15]{Lun15}
G.~{Lunardon}, R.~{Trombetti}, and Y.~{Zhou}.
\newblock {Generalized twisted Gabidulin codes}.
\newblock {\em ArXiv e-prints}, July 2015.

\bibitem[RP04]{Ric04}
G.~Richter and S.~Plass.
\newblock {Error and erasure decoding of rank-codes with a modified
  Berlekamp-Massey algorithm}.
\newblock In {\em 5th International ITG Conference on Source and Channel
  Coding}, pages 249--256, 2004.

\bibitem[RR17]{Ran17}
T.~Randrianarisoa and J.~Rosenthal.
\newblock A decoding algorithm for twisted gabidulin codes.
\newblock In {\em 2017 IEEE International Symposium on Information Theory
  (ISIT)}, pages 2771--2774, June 2017.

\bibitem[She16]{She16}
John Sheekey.
\newblock A new family of linear maximum rank distance codes.
\newblock {\em Advances in Mathematics of Communications}, 10(3):475--488,
  2016.

\end{thebibliography}

\end{document}